\providecommand{\tabularnewline}{\\}
\theoremstyle{plain}
\theoremstyle{plain}
\newtheorem{thm}{Theorem}
  \theoremstyle{plain}
  \newtheorem{cor}[thm]{Corollary}
  \theoremstyle{remark}
  \newtheorem{claim}[thm]{Claim}
  \theoremstyle{plain}
  \newtheorem{lem}[thm]{Lemma}
  \newcounter{casectr}
  \newenvironment{caseenv}
  {\begin{list}{{\itshape\ Case} \arabic{casectr}.}{%
   \setlength{\leftmargin}{\labelwidth}
   \addtolength{\leftmargin}{\parskip}
   \setlength{\itemindent}{\listparindent}
   \setlength{\itemsep}{\medskipamount}
   \setlength{\topsep}{\itemsep}}
   \setcounter{casectr}{0}
   \usecounter{casectr}}
  {\end{list}}
  \theoremstyle{plain}
  \newtheorem{prop}[thm]{Proposition}
\begin{document}

\title{3/2 firefighters are not enough}

\author{Ohad N.~Feldheim%
\thanks{School of Mathematics, Raymond and Beverly Sackler Faculty of Exact
Sciences, Tel Aviv University, Tel Aviv, Israel. E-mail: ohad\_f@netvision.net.il.
Research supported by an ERC advanced grant.%
} \and Rani Hod%
\thanks{School of Computer Science, Raymond and Beverly Sackler Faculty of
Exact Sciences, Tel Aviv University, Tel Aviv, Israel. E-mail: rani.hod@cs.tau.ac.il.
Research supported by an ERC advanced grant.%
}}
\maketitle
\begin{abstract}
The firefighter problem is a monotone dynamic process in graphs that
can be viewed as modeling the use of a limited supply of vaccinations
to stop the spread of an epidemic. In more detail, a fire spreads
through a graph, from burning vertices to their unprotected neighbors.
In every round, a small amount of unburnt vertices can be protected
by firefighters. How many firefighters per turn, on average, are needed
to stop the fire from advancing?

We prove tight lower and upper bounds on the amount of firefighters
needed to control a fire in the Cartesian planar grid and in the strong
planar grid, resolving two conjectures of Ng and Raff.
\end{abstract}

\section{Introduction}

The firefighter problem is the following dynamic problem introduced
by Hartnell~\cite{Hartnell95}. Given an undirected graph $G=\left(V,E\right)$,
a fire initially breaks out at a nonempty subset of vertices $\varnothing\subset S\subset V$.
In every round $t$, $f\left(t\right)$ firefighters are available
to be positioned at vacant and unburnt vertices of $G$. These firefighters
remain on their assigned vertices for the entire process, protecting
them from the fire. At the end of each round, the fire spreads to
all unprotected vertices adjacent to at least one burnt vertex.

For infinite graphs, two scenarios are possible:
\begin{enumerate}
\item [(\emph{i})] In finite time, the fire is controlled (i.e., is unable
to spread further) and thus all but a finite number of vertices remain
unburnt and unprotected.
\item [(\emph{ii})] The fire spreads indefinitely.
\end{enumerate}
Natural questions that can be asked are whether the fire can be controlled,
and, if so, how fast; a related question is how many vertices can
we save: absolute number for finite graphs, measure (defined properly)
for infinite graphs.

The firefighter problem was considered for a variety of families of
graphs, including infinite grids~\cite{DH07,Fogarty03-msc,Messin07,Messin08,Messin*,WM02},
finite grids~\cite{MW03,WM02}, and trees~\cite{FKMR07,Hartnell95}.

\bigskip{}

\global\long\def\cartesian{\,\square\,}
\global\long\def\triangular{\,\triangle\,}
\global\long\def\strong{\boxtimes}

In this paper we focus on two infinite grids: the Cartesian grid $\mathbb{Z}\cartesian\mathbb{Z}$,
which is the $4$-regular graph on the vertex set $\mathbb{Z}\times\mathbb{Z}$
in which the neighbors of every vertex form a sphere of radius $1$
with respect to the $\ell_{1}$ metric, and the strong grid $\mathbb{Z}\strong\mathbb{Z}$,
which is the $8$-regular graph on the vertex set $\mathbb{Z}\times\mathbb{Z}$
in which the neighbors of every vertex form a sphere of radius $1$
with respect to the $\ell_{\infty}$ metric. A third infinite grid,
which we only briefly mention, is the 6-regular triangular grid $\mathbb{Z}\triangular\mathbb{Z}$
satisfying $\mathbb{Z}\cartesian\mathbb{Z}\subset\mathbb{Z}\triangular\mathbb{Z}\subset\mathbb{Z}\strong\mathbb{Z}$.

We refer henceforth to vertices of these grids as points.

\subsection{Previous results}

Wang and Moeller~\cite{WM02} proved that when $f\equiv1$, a single-source
fire cannot be controlled even in the non-negative quadrant $\mathbb{N}\cartesian\mathbb{N}$
of $\mathbb{Z}\cartesian\mathbb{Z}$. With an additional firefighter
($f\equiv2$) a single-source fire in $\mathbb{Z}\cartesian\mathbb{Z}$
can be controlled within 8 turns and 18 burnt points. Fogarty~\cite{Fogarty03-msc}
proved that with $f\equiv2$ firefighters, any finite-source fire
in $\mathbb{Z}\cartesian\mathbb{Z}$ can be controlled. Messinger~\cite{Messin08}
proved that for any $n\in\mathbb{N}$, a single-source fire in $\mathbb{Z}\cartesian\mathbb{Z}$
can be controlled using the periodic function \[
f\left(t\right)=\begin{cases}
2, & t\bmod\left(2n+1\right)\textrm{ is zero or odd};\\
1, & t\bmod\left(2n+1\right)\textrm{ is even and nonzero},\end{cases}\]
whose average is $\left(3n+2\right)/\left(2n+1\right)=3/2+O\left(1/n\right)$.
Ng and Raff~\cite{NR08} proved that any periodic function $f$ whose
average exceeds $3/2$ allows the firefighters to control any finite-source
fire in $\mathbb{Z}\cartesian\mathbb{Z}$.

Develin and Hartke~\cite{DH07} proved that, for $d\ge3$, a single-source
fire in $\mathbb{Z}^{\square d}=\mathbb{Z}\cartesian\cdots\cartesian\mathbb{Z}$
cannot be controlled using $f\equiv2d-2$ firefighters (and is controlled
by $f\equiv2d-1$ firefighters within just two turns). Moreover, they
showed that for any fixed $m$, $f\equiv m$ firefighters cannot control
an $m^{2}$-source fire in $\mathbb{Z}^{\square d}$.

Fogarty~\cite{Fogarty03-msc} claimed that $f\equiv2$ firefighters
cannot control a single-source fire in the triangular grid $\mathbb{Z}\triangular\mathbb{Z}$
but her proof is not complete. Messinger~\cite{Messin07} proved
that slightly more firefighters can control it; namely, for any $n\in\mathbb{N}$
she describes a strategy using $f\left(t\right)=\begin{cases}
3, & t=0\bmod n;\\
2, & t\neq0\bmod n\end{cases}$ firefighters.

Messinger~\cite{Messin*} claimed that $f\equiv3$ firefighters cannot
control a single-source fire in the strong grid $\mathbb{Z}\strong\mathbb{Z}$,
or even to restrain it to a single quadrant, but here, too, the proof
is not complete. She proved that slightly more firefighters can control
it; that is, for any $n\in\mathbb{N}$ her scheme needs only $f\left(t\right)=\begin{cases}
4, & t=0\bmod n;\\
3, & t\neq0\bmod n\end{cases}$ firefighters.

\subsection{Our results}

All of our results depend on properties of the cumulative sum $f^{*}\left(t\right)=\sum_{\tau=1}^{t}f\left(\tau\right)$
of the function $f$.

We show the following lower bound for the Cartesian grid $\mathbb{Z}\cartesian\mathbb{Z}$,
closing the gap between the existing lower bound $1$ and the upper
bound $3/2+\epsilon$. 
\begin{thm}
\label{thm:cartesian-grid-LB}If $f^{*}\left(t\right)$ never exceeds
$\left(3t+1\right)/2$ then no strategy using $f$ firefighters can
control a single-source fire in $\mathbb{Z}\cartesian\mathbb{Z}$.
\end{thm}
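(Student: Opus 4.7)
My plan is to argue by contradiction via a potential-function argument. Assume for contradiction that some strategy with $f^*(t)\le(3t+1)/2$ for all $t$ controls a single-source fire on $\mathbb{Z}\cartesian\mathbb{Z}$; without loss of generality, the fire starts at the origin. Let $B_t$ denote the burnt set at time $t$ and $P_t$ the cumulative protected set.

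I would introduce a potential
\[ \Phi(t) = \max_{(x,y)\in B_t}(x+y) + \max_{(x,y)\in B_t}(x-y) + \max_{(x,y)\in B_t}(-x-y), \]
the sum of the fire's extremal reach in three of the four diagonal directions (say NE, SE, SW), writing $\Phi_{\mathrm{NE}},\Phi_{\mathrm{SE}},\Phi_{\mathrm{SW}}$ for the three summands. Since the origin is in $B_t$, each summand is non-negative and $\Phi(0)=0$. The goal is a step-wise inequality of the form $\Phi(t+1)-\Phi(t)\ge 3-2f(t+1)$, which telescoped gives $\Phi(t)\ge 3t-2f^*(t)\ge -1$. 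Since a controlled fire has bounded $B_t$, and hence bounded $\Phi(t)$, while the fire (as long as it is alive) forces $\Phi$ to grow unboundedly, this yields the contradiction.

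The rate $3$ reflects that in unimpeded spread each of $\Phi_{\mathrm{NE}},\Phi_{\mathrm{SE}},\Phi_{\mathrm{SW}}$ grows by exactly $1$ per turn. For the per-firefighter cost $2$, the key geometric observation is that a firefighter at $(p,q)$ can aid in blocking the NE, SE, or SW direction only if $p+q=\Phi_{\mathrm{NE}}(t)+1$, $p-q=\Phi_{\mathrm{SE}}(t)+1$, or $-p-q=\Phi_{\mathrm{SW}}(t)+1$, respectively. Summing the first and third of these gives $0=\Phi_{\mathrm{NE}}(t)+\Phi_{\mathrm{SW}}(t)+2\ge 2$, a contradiction. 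So no firefighter can aid in blocking all three directions simultaneously, and at most two of the three conditions can hold per placement.

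I expect the main technical difficulty to be turning the ``one aid per direction'' observation into a precise step inequality with the right constants. Reducing the growth of $\Phi_{\mathrm{NE}}$ by $1$ actually requires protecting \emph{all} outward neighbors of the current NE-extremal front -- potentially many firefighters per ``block'' -- so the natural per-firefighter bound is subtle and likely demands an amortized accounting: for instance, tracking the length of the extremal front in each direction alongside its extremal value, or charging partial blocks across multiple turns. Recovering the tight constant $(3t+1)/2$ rather than the looser $3t/2$ will additionally require careful handling of initial configurations and parity.
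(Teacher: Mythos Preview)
Your proposal has two genuine gaps.

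First, the step inequality $\Phi(t+1)-\Phi(t)\ge 3-2f(t+1)$ is false as stated. A front that was walled off by firefighters placed in \emph{earlier} rounds can remain frozen in a round with $f(t+1)=0$; for instance, after placing two firefighters at $(1,0)$ and $(0,1)$ in round~1, the NE extent stays at~$0$ in round~2 even if no new firefighter is placed, so $\Phi$ grows by only~$2$ while your inequality demands~$3$. You anticipate this (``potentially many firefighters per block\ldots likely demands an amortized accounting''), but the amortization is not a detail to be filled in later---it is the entire content of the proof.

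Second, even granting the step inequality, your telescoped conclusion is only $\Phi(t)\ge 3t-2f^{*}(t)\ge -1$, which says nothing about unbounded growth and hence gives no contradiction with the fire being controlled. The sentence ``the fire forces $\Phi$ to grow unboundedly'' is exactly what needs to be proved, and $\Phi(t)\ge -1$ does not establish it.

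The paper's argument repairs both issues simultaneously by tracking, for each of the \emph{four} diagonal fronts $L_{i,j}$, the count $\phi_{i,j}(t)$ of endangered (unprotected, fire-adjacent) points on it---precisely the ``length of the extremal front'' you allude to. The basic amortized estimate is $\phi_{i,j}(t)\ge \tfrac14 + c_{i,j}(t) - f_{i,j}^{*}(t)$, where $c_{i,j}$ is the front's position and $f_{i,j}^{*}$ counts firefighters ever placed on it. Summing and combining with the geometric constraint $\phi_{i,j}\le\rho_{i,j}$ (a front can hold at most its length in endangered points), one shows that whenever the perimeter $\rho(t)=\sum c_{i,j}(t)$ satisfies $\rho(t)\ge 2f^{*}(t)-1$, no two fronts---neither adjacent nor opposite---can be simultaneously frozen. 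Hence at least three fronts advance, $\rho(t+1)\ge\rho(t)+3$, and induction gives $\rho(t)\ge 3t$ outright. This is the unbounded growth you were after, obtained not from a per-step firefighter charge but from the inductive invariant $\rho(t)\ge 3t\ge 2f^{*}(t)-1$.

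Note in particular that the argument must use all four directions: the conclusion is ``at most one front frozen,'' and which one may vary with~$t$, so fixing three directions in advance cannot work.
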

Theorem~\ref{thm:cartesian-grid-LB} settles \cite[Conjecture 1]{NR08}
when applied to the function $f\left(t\right)=1+\left(t\bmod2\right)$
--- that is, the sequence $2,1,2,1,\ldots$. Moreover, Theorem~\ref{thm:cartesian-grid-LB}
implies the lower bound $3$ for the strong grid $\mathbb{Z}\boxtimes\mathbb{Z}$.
\begin{cor}
\label{cor:strong-grid-LB}If $f^{*}\left(t\right)$ never exceeds
$3t+1$ then no strategy using $f$ firefighters can control a single-source
fire in $\mathbb{Z}\strong\mathbb{Z}$.
\end{cor}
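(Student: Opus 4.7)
I would exploit the bipartite structure of $\mathbb{Z}\strong\mathbb{Z}$ and reduce the problem to two instances of the Cartesian-grid firefighter problem. Partition $\mathbb{Z}^{2}$ into the two sublattices $L_{0}=\{(x,y):x+y\text{ even}\}$ and $L_{1}=\{(x,y):x+y\text{ odd}\}$. The diagonal edges of $\mathbb{Z}\strong\mathbb{Z}$ (displacements $(\pm 1,\pm 1)$) preserve the parity of $x+y$, whereas the orthogonal edges flip it. Hence, for each $i\in\{0,1\}$, the subgraph $H_{i}$ induced on $L_{i}$ by its diagonal edges alone is isomorphic to $\mathbb{Z}\cartesian\mathbb{Z}$ via the affine map $(x,y)\mapsto((x+y)/2,(x-y)/2)$ (with an integer translation for $L_{1}$).

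The key structural observation is that firefighters placed on $L_{1-i}$ cannot slow the diagonal propagation of fire within $L_{i}$: any two diagonally adjacent vertices of $L_{i}$ are directly adjacent in $\mathbb{Z}\strong\mathbb{Z}$, so the edge between them does not pass through $L_{1-i}$. Splitting any firefighter function as $f=f_{0}+f_{1}$ according to the parity of the vertex each firefighter occupies (so that $f_{0}^{*}+f_{1}^{*}=f^{*}\le 3t+1$), a straightforward induction shows that the burning set of the strong-grid fire restricted to $L_{i}$ set-wise dominates the burning set of an $H_{i}$-fire using only the firefighters $f_{i}$ and starting at whichever $L_{i}$-vertices catch fire.

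Suppose, for contradiction, that some strategy controls the single-source strong-grid fire at $(0,0)\in L_{0}$ while obeying $f^{*}(t)\le 3t+1$. Then the single-source $H_{0}$-fire at $(0,0)$ using only $f_{0}$ is controlled, so by the contrapositive of Theorem~\ref{thm:cartesian-grid-LB} there is some $t_{0}$ with $f_{0}^{*}(t_{0})>(3t_{0}+1)/2$. Similarly, pick any $w\in L_{1}$ that catches fire at some time $s\ge 1$; such a $w$ exists with $s$ bounded by a small constant since the round-$1$ budget $f(1)\le 4$ cannot protect all eight neighbors of $(0,0)$. The single-source $H_{1}$-fire at $w$ starting at time $s$ using only $f_{1}$ is likewise controlled, and a time-shifted application of Theorem~\ref{thm:cartesian-grid-LB} yields some $t_{1}>s$ with $f_{1}^{*}(t_{1})-f_{1}^{*}(s)>(3(t_{1}-s)+1)/2$.

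The main obstacle I foresee is combining these two inequalities---which hold at potentially distinct times $t_{0}$ and $t_{1}$---into a contradiction with $f^{*}(t)\le 3t+1$. Naive summation at $t=\max(t_{0},t_{1})$ fails because the monotonicity of $f_{i}^{*}$ only gives $f_{i}^{*}(t)\ge f_{i}^{*}(t_{i})>(3t_{i}+1)/2$, not $>(3t+1)/2$. The natural resolution is to strengthen Theorem~\ref{thm:cartesian-grid-LB} via its proof so that the lower bound on $f^{*}$ is guaranteed to hold at the actual containment time of the controlled fire, rather than at some unspecified earlier time. With such a strengthening, choosing $T$ large enough that both $H_{i}$-fires are fully contained and summing the two bounds at $t=T$ produces $f^{*}(T)>3T+1-O(s)$, contradicting the hypothesis once $T$ is large compared to the bounded shift $s$.
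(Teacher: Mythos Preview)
Your proposal has a genuine gap, and the fix you sketch does not close it. Even granting the strengthening you describe---that the inequality for each $H_i$ may be taken at its actual containment time $T_i$---the two times $T_0$ and $T_1$ can differ, and after containment there is no reason for $f_i^{*}$ to keep growing at rate $3/2$. Passing to $T=\max(T_0,T_1)$ and using monotonicity therefore gives only
\[
f^{*}(T)\ \ge\ f_0^{*}(T_0)+f_1^{*}(T_1)\ >\ \frac{3T_0+1}{2}+\frac{3(T_1-s)+1}{2},
\]
and since $T_0+T_1\le 2T$ the right-hand side is at most $3T+1-3s/2$, which (as $s\ge 1$) is perfectly consistent with $f^{*}(T)\le 3T+1$; no contradiction results. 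The underlying difficulty is structural: you are running two Cartesian-grid fires in parallel at the \emph{same} time scale as the strong-grid fire and then attempting a division-of-resources argument, but the contrapositive of Theorem~\ref{thm:cartesian-grid-LB} is an existential statement (``some $t$'') that does not compose under addition. There are also secondary issues you have not addressed: the $H_1$-fire may ignite at several $L_1$-vertices simultaneously rather than at a single source, and the firefighters $f_1(1),\dots,f_1(s)$ are already in place when it starts, so a plain time-shifted application of Theorem~\ref{thm:cartesian-grid-LB} is not available either.

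The paper's route avoids all of this by trading your two parallel copies for a \emph{single} Cartesian-grid process at half the time scale. The map $(x,y)\mapsto(x+y,x-y)$ identifies $\mathbb{Z}\strong\mathbb{Z}$ with the \emph{square} of $\mathbb{Z}\cartesian\mathbb{Z}$ restricted to the even sublattice, so one strong-grid step corresponds to two Cartesian-grid steps. A strong-grid strategy with firefighter function $f$ thus transports to a Cartesian-grid strategy with function $g$ satisfying $g^{*}(2k)=f^{*}(k)$, and the hypothesis $f^{*}(k)\le 3k+1$ becomes essentially $g^{*}(t)\le(3t+1)/2$; a single invocation of Theorem~\ref{thm:cartesian-grid-LB} then finishes. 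The factor of two you were trying to recover by splitting firefighters across $L_0$ and $L_1$ is obtained here cleanly from the time-doubling, with no combination step at all.
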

We show a essentially matching upper bound for the strong grid.
\begin{thm}
\label{thm:strong-grid-UB}If $\lim\inf f^{*}\left(t\right)/t>3$
then for any finite-source fire in $\mathbb{Z}\strong\mathbb{Z}$,
there exists a strategy using $f$ firefighters that can control it.
\end{thm}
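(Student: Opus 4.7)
The plan is to construct an explicit firefighter strategy that encloses the fire inside a rectangular barrier of protected vertices. Let $\varepsilon > 0$ and $T_{0} \in \mathbb{N}$ satisfy $f^{*}(t) \ge (3+\varepsilon)t$ for all $t \ge T_{0}$.

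I would first reduce to a single-source fire. Given a finite source contained in a box, I would use the firefighters of rounds $1,\ldots,T_{0}$ arbitrarily and absorb the resulting bounded burnt region into the source; after enlarging $T_{0}$, we may assume the fire starts at a single vertex (the origin) and the hypothesis $f^{*}(t) \ge (3+\varepsilon)t$ holds from this moment onwards.

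\textbf{Main construction.} I would surround the origin by an axis-aligned rectangular wall $R = [-W_{L}, W_{R}] \times [-H_{B}, H_{T}]$, with one-vertex overlaps at each corner to block the diagonal adjacencies of the strong grid; once the wall is closed the fire is confined inside $R$. The total wall length is $2(W_{L} + W_{R} + H_{T} + H_{B}) + O(1)$, and with the four dimensions chosen to grow geometrically the total fits inside the $(3+\varepsilon)t$ budget at the final moment.

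\textbf{Scheduling.} A purely \emph{static} schedule---building each wall entirely before the fire reaches its line---does not work: the longest wall would then need to be constructed in the shortest interval, and a direct computation forces the required rate above $3+\varepsilon$. Instead I would use a dynamic schedule: each wall begins as a short initial segment completed just in time for the fire to reach its line, and is thereafter extended by one vertex per round at each endpoint to keep up with the fire's sideways motion along it. As soon as a perpendicular wall is completed, the corresponding endpoint no longer needs extending, freeing those firefighters. By interleaving the four builds in parallel and placing the four walls at carefully chosen distances $H_{T} \ll H_{B} \ll W_{R} \ll W_{L}$ (growing geometrically), the cumulative firefighter count stays below $(3+\varepsilon)t$ at every round.

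\textbf{Main obstacle.} The core difficulty is that maintaining the two endpoints of a completed wall against the creeping fire already costs $2$ firefighters per round, so if two walls are simultaneously in this maintenance regime the cost jumps to $4$, exceeding the budget $3+\varepsilon$. The $+\varepsilon$ slack from the hypothesis is exactly what is needed to tolerate short windows of double maintenance, and the schedule must arrange that these windows remain short (by timing each subsequent wall's completion to quickly relieve the endpoints of the previously completed wall). Carrying out this scheduling rigorously and verifying the corner-sealing in the strong grid form the bulk of the proof; the $+\varepsilon$ slack is essential, since Corollary~\ref{cor:strong-grid-LB} shows that no strategy with $f^{*}(t) \le 3t+1$ can control the fire.
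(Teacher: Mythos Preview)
Your overall plan---enclose the fire in an axis-aligned rectangle, seed each side with a short segment just before the fire reaches that line, then extend the two loose ends at one vertex per round---is exactly the right picture, and it is essentially what the paper does. The reduction to a single source by absorbing the first $T_0$ rounds into an enlarged initial ball is also fine.

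The gap is in the wall ordering. You take the four distances to satisfy $H_T \ll H_B \ll W_R \ll W_L$, so the first two walls to become active are the top and bottom, which are \emph{parallel}. Consider time $t = H_B$, when the fire first touches the bottom line. At that moment neither side wall has been reached (both lie at distance $>H_B$), so the fire's horizontal extent is the full interval $[-H_B,H_B]$. Hence the top wall must already span $[-H_B,H_B]$ (length $2H_B+1$) and the bottom wall must span the same interval (another $2H_B+1$). This forces $f^{*}(H_B)\ge 4H_B+2$, contradicting $f^{*}(H_B)\approx(3+\varepsilon)H_B$ whenever $\varepsilon<1$. Completing the bottom wall does not relieve either endpoint of the top wall, so the ``double-maintenance window'' you hope to keep short is in fact the entire interval $[H_B,W_R]$, which your geometric spacing makes long, not short. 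The $\varepsilon$-slack cannot absorb it.

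The paper sidesteps this by never activating two parallel walls consecutively. Its order is North, East, West, South: each new side is reached by first \emph{extending} an already-frozen side to the appropriate corner and then turning. Consequently every new wall is born with one corner already anchored, and at all times after the first phase there is exactly \emph{one} loose endpoint needing pure $1$-per-round maintenance; the remaining $\approx 2+\varepsilon$ per round goes into racing ahead along the current extension and building the next side. With the four distances chosen as a geometric progression in $m=\lceil 1/\varepsilon\rceil$ (roughly $3r$, $6rm$, $6rm^2$, $12rm^2$), the cumulative count checks out at each phase boundary. Your scheme becomes correct if you switch to this adjacent-wall ordering (e.g.\ $H_T \ll W_R \ll H_B \ll W_L$) and fold one endpoint's ``maintenance'' into the extension toward the next corner; but as written, the parallel-first schedule fails for small $\varepsilon$.
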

Theorem~\ref{thm:strong-grid-UB} yields the following generalization
of the known upper bound for the Cartesian grid $\mathbb{Z}\cartesian\mathbb{Z}$,
which allows for non-periodic functions. This settles \cite[Conjecture 2]{NR08}.
\begin{cor}
\label{cor:cartesian-grid-UB}If $\lim\inf f^{*}\left(t\right)/t>3/2$
then for any finite-source fire in $\mathbb{Z}\cartesian\mathbb{Z}$,
there exists a strategy using $f$ firefighters that can control it.
\end{cor}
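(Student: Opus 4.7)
My plan is to reduce Corollary~\ref{cor:cartesian-grid-UB} to Theorem~\ref{thm:strong-grid-UB} via a $45^\circ$ rotation that identifies $\mathbb{Z}\cartesian\mathbb{Z}$ with the even-parity sublattice of $\mathbb{Z}\strong\mathbb{Z}$. Define $\phi:\mathbb{Z}^2\to\mathbb{Z}^2$ by $\phi(x,y)=(x+y,x-y)$; its image is $L_e=\{(a,b):a+b\in 2\mathbb{Z}\}$, and $\phi$ restricts to a graph isomorphism from $\mathbb{Z}\cartesian\mathbb{Z}$ onto the subgraph of $\mathbb{Z}\strong\mathbb{Z}$ induced on $L_e$, in which two $L_e$-points are adjacent iff they differ by one of the four diagonal vectors $(\pm1,\pm1)$. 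The key geometric identity is $\|u-v\|_1=\|\phi(u)-\phi(v)\|_\infty$, so the Cartesian fire from a source $S$ at time $t$ (absent firefighters) is mapped by $\phi$ exactly onto $L_e$ intersected with the $\ell_\infty$-ball of radius $t$ around $\phi(S)$ --- the even-parity slice of the strong fire.

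Given $f$ with $\liminf f^*(t)/t>3/2$, I set $g(t)=2f(t)$, so that $\liminf g^*(t)/t>3$, and apply Theorem~\ref{thm:strong-grid-UB} to the strong-grid source $\phi(S)$. This yields a strategy $P$ controlling the strong fire using at most $g(t)$ firefighters per round. The Cartesian strategy $Q$ I propose places, in round $t$, firefighters at the points $\phi^{-1}(P(t)\cap L_e)$. A short induction on $t$, using the containment of diagonal-$L_e$ adjacency in strong adjacency together with the observation that for $p\in L_e$ one has $p\in P\cap L_e$ iff $p\in P$, shows that $\phi$ maps the Cartesian fire at time $t$ under $Q$ into $L_e\cap(\text{strong fire at time }t\text{ under }P)$, which is finite; hence the Cartesian fire is controlled.

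The main technical obstacle is matching the per-round rate: a priori the strong strategy $P$ could place all $g(t)=2f(t)$ of its firefighters inside $L_e$, doubling the Cartesian budget. To overcome this I would appeal to the structure of the strategy produced by the proof of Theorem~\ref{thm:strong-grid-UB} --- presumably a growing $\ell_\infty$-rectangular barrier, whose intersection with an axis-aligned line alternates between $L_e$ and $L_o$, so that after a shift of at most one coordinate the barrier meets each parity class in roughly equal numbers. Coupled with the extra slack provided by $\liminf g^*(t)/t$ being strictly above $3$, this should give a translate of $P$ whose $L_e$-count per round is at most $f(t)$, and an analogous construction carried out with the $L_o$-embedding $\phi'(x,y)=\phi(x,y)+(1,0)$ offers a second candidate whose counts sum with those of the original to $|P(t)|$, guaranteeing that at least one of the two extracted strategies respects the $f$-budget in each round. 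Turning this averaged statement into a single consistent Cartesian strategy is the crux of the proof.
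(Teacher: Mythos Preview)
Your geometric setup via the rotation $\phi(x,y)=(x+y,x-y)$ is exactly the one the paper uses, but you have the time correspondence wrong, and this is why you run into a budget problem you cannot close. You keep a $1{:}1$ time scale and compensate by doubling the rate, $g(t)=2f(t)$; then you must argue that at most half of the strong-grid firefighters in each round land on $L_e$. Your proposed fixes do not work: the statement ``for each round, either the original strategy or its $(1,0)$-shift places at most $f(t)$ firefighters on $L_e$'' is true round-by-round but gives no single consistent strategy, and the slack from the strict inequality $\liminf>3$ is an asymptotic statement about $g^*$, not a per-round bound, so it cannot absorb a potential excess of $f(t)$ in a single round. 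This is a genuine gap, and you correctly flag it as ``the crux''.

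The paper sidesteps the issue entirely by using a $2{:}1$ time dilation instead of a rate doubling. Because $\mathbb{Z}\cartesian\mathbb{Z}$ is bipartite, two Cartesian steps restricted to even-parity points are exactly one step of $\mathbb{Z}\strong\mathbb{Z}$ under $\phi$: the square of $\mathbb{Z}\cartesian\mathbb{Z}$ on even points is isomorphic to $\mathbb{Z}\strong\mathbb{Z}$. So one defines $\tilde f(k)=f(2k-1)+f(2k)$, gets $\liminf \tilde f^{*}(k)/k>3$, applies Theorem~\ref{thm:strong-grid-UB} to obtain a strong-grid strategy placing a set $P_k$ of $\tilde f(k)$ firefighters at strong-time $k$, and then places $\phi^{-1}(P_k)$ in the Cartesian grid across the two rounds $2k-1$ and $2k$, split into pieces of sizes $f(2k-1)$ and $f(2k)$. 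Since the Cartesian fire only reaches new even-parity points at even times, all of $\phi^{-1}(P_k)$ is in place before it matters, and the budget is met exactly with no parity-counting argument needed.
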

Note that $\lim\inf$ is the correct measure for $f^{*}\left(t\right)/t$
rather than $\lim\sup$, since it is easy to build, for any $\epsilon>0$,
an example of a function $f$ satisfying $\lim\sup f^{*}\left(t\right)/t=4-\epsilon$
(resp., $8-\epsilon$) such that a single-source fire in $\mathbb{Z}\cartesian\mathbb{Z}$
(resp., $\mathbb{Z}\strong\mathbb{Z}$) cannot be controlled by $f$
firefighters.

Our proofs can be easily adapted to show analoguous upper and lower
bound for the triangular grid $\mathbb{Z}\triangular\mathbb{Z}$,
in which the threshold is $2$.

\subsection{Related work}

The firefighter problem is loosely connected with Conway's angel problem~\cite{BCG82-book}.
This is a game of pursuit in $\mathbb{Z}\strong\mathbb{Z}$, in which
the angel can move to any point within $\ell_{\infty}$-distance $k$
and the devil can destroy one unoccupied point per turn, bearing similarities
to the $f\equiv1/k$ case of the firefighter problem. The two main
differences between the angel problem and the firefighter problem
are
\begin{enumerate}
\item The fire is \emph{non-deterministic}, that is, it needs not choose
its path in advance;
\item The firefighters play a \emph{predetermined} strategy, that is, they
cannot adapt their strategy to the fire's advancement.
\end{enumerate}
It is known that for $1\le k<2$, where the fractional version is
defined appropriately, the devil wins~\cite{Kutz04-phd}, and that
for $k\ge2$ the angel wins~\cite{Bowditch07,Gacs07,Kloster07,Mathe07}.
Our results, when presented as a variant of the angel problem in which
the fire is more powerful, show that the threshold is $1/3$ instead
of $2$.

\bigskip{}

The rest of the paper is organized as follows. In Section~\ref{sec:lower-bound-proof}
we prove Theorem~\ref{thm:cartesian-grid-LB}, in Section~\ref{sec:upper-bound-proof}
we prove Theorem~\ref{thm:strong-grid-UB}, and in Section~\ref{sec:corollaries-proof}
we show how these two theorems imply Corollaries~\ref{cor:strong-grid-LB}
and~\ref{cor:cartesian-grid-UB}.

Throughout the paper we denote the set of non-negative integers by
$\mathbb{N}$ and the set of integers by $\mathbb{Z}$. For a sequence
$s\left(t\right)$ we define $\lim\inf s\left(t\right)=\lim_{t_{0}\to\infty}\inf\left\{ s\left(t\right):t\ge t_{0}\right\} $.
By $\left\lceil x\right\rceil $ (resp., $\left\lfloor x\right\rfloor $)
we denote the real number $x$ rounded up (resp., down) to the closest
integer.

\section{\label{sec:lower-bound-proof}Proof of Theorem~\ref{thm:cartesian-grid-LB}}

\subsection{Time-line}

Our proof of Theorem~\ref{thm:cartesian-grid-LB} makes use of several
sequences, all of which are represented as some function measured
at integer times $t$. To circumvent ambiguity that can arise due
to timing subtleties, we define a time-line for the process as follows
(here $n$ is a positive integer).

\begin{center}
\begin{tabular}{|l|l|}
\hline 
Time $t$ & What happens?\tabularnewline
\hline
\hline 
$0$ & The grid is created, empty and void.\tabularnewline
\hline 
$1/3$ & The initial set of points $S$ is set on fire.\tabularnewline
\hline 
$n-1/3$ & The $n$th squad consisting of $f\left(n\right)$ firefighters is
placed on the grid.\tabularnewline
\hline 
$n$ & Nothing. Crickets chirp.\tabularnewline
\hline 
$n+1/3$ & The fire spreads to adjacent unprotected points.\tabularnewline
\hline
\end{tabular}
\par\end{center}

\subsection{Definitions and simple claims}

Fix a strategy using $f$ firefighters. In the following definitions
$t$ is a natural number representing time and $i,j\in\left\{ \pm1\right\} $
represent together a direction: north-east, north-west, south-west
or south-east.

Although all objects we define are a function of time, we may omit
$t$ from the notation when the context allows.

\paragraph{Fire fronts, lengths and perimeter.}

The \emph{fire front} $L_{i,j}=L_{i,j}\left(t\right)$ is the line
\[
L_{i,j}=\left\{ \left(x,y\right)\in\mathbb{Z}\times\mathbb{Z}:xi+yj=c_{i,j}\right\} ,\]
where $c_{i,j}=c_{i,j}\left(t\right)$ is the minimal natural number
for which no point on $L_{i,j}$ is burning at time $t$. %
{}

The \emph{length} $\rho_{i,j}=\rho_{i,j}\left(t\right)$ of a fire
front $L_{i,j}$ is defined as the $\ell_{\infty}$ distance between
$L_{i,j}\cap L_{i,-j}$ and $L_{i,j}\cap L_{-i,j}$.

The sum of the lengths of all four fire fronts is the \emph{fire perimeter}
at time $t$, which we denote by $\rho=\rho\left(t\right)$. Note
that $\rho_{i,j}=\rho_{-i,-j}=\frac{1}{2}\left(c_{i,-j}+c_{-i,j}\right)$
and thus $\rho=\sum_{i,j\in\left\{ \pm1\right\} }c_{i,j}$.

\paragraph{Total and front potential.}

A point is \emph{endangered} if it is unprotected and adjacent to
burning point. We define the \emph{total potential} $\phi=\phi\left(t\right)$
at time $t$ as the number of endangered points on $L\left(t\right)=\bigcup_{i,j\in\left\{ \pm1\right\} }L_{i,j}\left(t\right)$;
that is, the difference between the total number of points in $L\left(t\right)$
adjacent to burning points and the amount of firefighters protecting
such points. For consistency, we define $\phi\left(0\right)=1$ (that
is, the fire source is the single endangered point).

Note that our choice of time-line dictates that all these $\phi$
endangered points catch fire at time $t+1/3$.

Moreover, we define the \emph{potential} $\phi_{i,j}=\phi_{i,j}\left(t\right)$
of a fire front $L_{i,j}$ as the contribution of points on $L_{i,j}$
to the potential. More precisely, an endangered point on a single
$L_{i,j}$ contributes one to $\phi_{i,j}$ and an endangered point
that belongs to two adjacent fire fronts contributes $1/2$ to the
potential of each.%
\footnote{As a special case, at time $t=0$ we have $\phi_{i,j}\left(0\right)=1/4$. %
}

Note that $\phi=\sum_{i,j\in\left\{ \pm1\right\} }\phi_{i,j}$.
\begin{claim}
\label{clm:length-bounds-potential}For all $t\in\mathbb{N}$ and
$i,j\in\left\{ \pm1\right\} $ we have $\phi_{i,j}\left(t\right)\le\rho_{i,j}\left(t\right)$.\end{claim}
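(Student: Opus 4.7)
The plan is to identify which lattice points on $L_{i,j}$ can contribute to $\phi_{i,j}$, count them, and combine this count with the rule that a point lying on two adjacent fronts contributes only $1/2$ to each. Write $P:=L_{i,j}\cap L_{i,-j}$ and $Q:=L_{i,j}\cap L_{-i,j}$ for the two corners of $L_{i,j}$.

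I first argue that every endangered point $(x,y)\in L_{i,j}$ lies on the closed segment from $P$ to $Q$. Since the fire is single-source and spreads along $4$-adjacencies, the burning set is connected and therefore lies strictly inside the diamond $\{(x,y):xi'+yj'<c_{i',j'}(t)\text{ for all }i',j'\in\{\pm1\}\}$. In particular, the two outward grid-neighbors of $(x,y)$, which satisfy $xi+yj=c_{i,j}+1$, cannot be burning, so a burning neighbor of $(x,y)$ is one of the inward neighbors $(x-i,y)$ or $(x,y-j)$. Expanding the condition that this inward neighbor lies strictly inside $L_{-i,j}$ and $L_{i,-j}$ gives $-xi+yj\le c_{-i,j}$ and $xi-yj\le c_{i,-j}$; together with $xi+yj=c_{i,j}$, these are exactly the defining inequalities of the segment from $P$ to $Q$.

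Next I count lattice points on this segment. A direct computation yields $P=\bigl(i(c_{i,j}+c_{i,-j})/2,\;j(c_{i,j}-c_{i,-j})/2\bigr)$ and $Q=\bigl(i(c_{i,j}-c_{-i,j})/2,\;j(c_{i,j}+c_{-i,j})/2\bigr)$, so $P$ is a lattice point if and only if $c_{i,j}\equiv c_{i,-j}\pmod{2}$, and similarly for $Q$. The $\ell_\infty$-length $\rho_{i,j}=(c_{i,-j}+c_{-i,j})/2$ is therefore an integer when both or neither corner is a lattice point, and a half-integer when exactly one is; a short count gives the number of lattice points on the closed segment as $\rho_{i,j}+1$, $\rho_{i,j}$, or $\rho_{i,j}+1/2$ in these three respective cases.

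Finally, each endangered corner contributes at most $1/2$ to $\phi_{i,j}$ while every other endangered lattice point on the segment contributes at most $1$; a one-line arithmetic check in each of the three regimes yields $\phi_{i,j}\le\rho_{i,j}$. There is no real obstacle here beyond careful parity bookkeeping, and the shared-corner convention is exactly what cancels the spurious $+1$ in the naive count of lattice points.
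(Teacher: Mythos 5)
Your proof is correct and takes essentially the same route as the paper's one-sentence argument, which simply asserts that the front of length $\rho_{i,j}$ must accommodate all $\phi_{i,j}$ endangered points; you have supplied the details the paper leaves implicit, namely the confinement of endangered points to the segment between the two corners (via connectedness of the burnt set and the location of their burning neighbors) and the parity bookkeeping showing that the half-contributions at shared corners exactly absorb the extra lattice point. One caveat, shared equally by the paper's proof, is the degenerate time $t=0$, where the ad hoc convention $\phi_{i,j}\left(0\right)=1/4$ together with $\rho_{i,j}\left(0\right)=0$ technically violates the stated inequality.
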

\begin{proof}
The length $\rho_{i,j}$ of the fire front $L_{i,j}$ must be able
to accomodate all $\phi_{i,j}$ endangered points on $L_{i,j}$, which
catch fire immediately.
\end{proof}

\paragraph{Active and frozen fronts.}

The fire front $L_{i,j}$ is \emph{active} at time $t\ge0$ if $L_{i,j}\left(t+1\right)\neq L_{i,j}\left(t\right)$
and is \emph{frozen} otherwise. Let $a_{i,j}\left(t\right)=c_{i,j}\left(t+1\right)-c_{i,j}\left(t\right)$;
that is, the indicator variable $a_{i,j}\left(t\right)$ takes the
value $1$ if $L_{i,j}\left(t\right)$ is active and the value $0$
if it is frozen.

We denote the number of active fire fronts at time $t$ by $a\left(t\right)=\sum_{i,j\in\left\{ \pm1\right\} }a_{i,j}\left(t\right)\in\left\{ 0,1,2,3,4\right\} $.
Note that by definition $a\left(t\right)=\rho\left(t+1\right)-\rho\left(t\right)$.
\begin{claim}
\label{clm:frozen-iff-zero-potential}For all $t\in\mathbb{N}$ and
$i,j\in\left\{ \pm1\right\} $ we have $a_{i,j}\left(t\right)=0$
if and only if $\phi_{i,j}\left(t\right)=0$.\end{claim}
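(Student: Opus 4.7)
The plan is to prove both implications by tracking the positions of burning points across the fire-spread step at time $t+1/3$, using the definition of $c_{i,j}$ together with the fact that in $\mathbb{Z}\cartesian\mathbb{Z}$ neighbors differ by $\pm 1$ in exactly one coordinate.

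The key preliminary observation is that, by minimality of $c_{i,j}(t)$, every burning point at time $t$ satisfies $xi+yj\le c_{i,j}(t)-1$. Consequently, any point that acquires fire at $t+1/3$ (being adjacent to an already-burning point) has $xi+yj\le c_{i,j}(t)$. In particular, the new burning points on $L_{i,j}(t)$ are exactly those unprotected points on $L_{i,j}(t)$ with a burning neighbor at $xi+yj=c_{i,j}(t)-1$, i.e., the endangered points counted by $\phi_{i,j}(t)$.

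For the direction $\phi_{i,j}(t)=0 \Rightarrow a_{i,j}(t)=0$, I would argue that no point on $L_{i,j}(t)$ becomes burning at $t+1/3$, so the line at position $c_{i,j}(t)$ is still fire-free at time $t+1$; combined with monotonicity (burning sites stay burning, so $c_{i,j}$ is nondecreasing), this forces $c_{i,j}(t+1)=c_{i,j}(t)$. For the converse $\phi_{i,j}(t)>0 \Rightarrow a_{i,j}(t)=1$, an endangered point on $L_{i,j}(t)$ catches fire, giving $c_{i,j}(t+1)\ge c_{i,j}(t)+1$, while the one-step reach bound from the previous paragraph yields $c_{i,j}(t+1)\le c_{i,j}(t)+1$, so $a_{i,j}(t)=1$.

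I do not expect any real obstacle: the statement is essentially a sanity check that the indicator $a_{i,j}$ and the scalar quantity $\phi_{i,j}$ measure the same binary event from two slightly different angles. The only mild subtlety is the base case $t=0$, where $\phi_{i,j}(0)=1/4>0$ by the convention set just before the claim, and all four fronts do advance in the first step, so $a_{i,j}(0)=1$ --- in agreement with the equivalence.
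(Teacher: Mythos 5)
Your proof is correct and follows essentially the same route as the paper's: the paper simply notes that exactly the $\phi_{i,j}(t)$ endangered points on $L_{i,j}(t)$ catch fire at time $t+1/3$, and the front advances if and only if this number is positive. Your write-up just makes explicit the bookkeeping (minimality of $c_{i,j}$, the one-step reach bound giving $c_{i,j}(t+1)\le c_{i,j}(t)+1$) that the paper leaves implicit.
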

\begin{proof}
Exactly $\phi_{i,j}\left(t\right)$ endangered points on $L_{i,j}\left(t\right)$
caught fire between time $t$ and $t+1$ (specifically, at time $t+1/3$).
The fire front is active if and only if this number is positive.
\end{proof}
Note that a reactivation of a frozen front can only occur when an
adjacent active fire front endangers its corner, giving it a potential
of $1/2$.

\subsection{Bounding the potential}

The following lemma bounds the potential from below by bounding the
change in potential between consecutive times. Denote by $f_{i,j}\left(t\right)$
the number of firefighters placed on $L_{i,j}\left(t\right)$ until
time $t$ that were not counted in any $f_{i',j'}\left(\tau\right)$
for $\tau<t$ (this distinction is needed in order to avoid double-counting
of firefighters on a frozen fire front) and let $f_{i,j}^{*}\left(t\right)=\sum_{\tau=1}^{t}f_{i,j}\left(\tau\right)$.
\begin{lem}
\label{lem:potential-LB}For all $t\in\mathbb{N}$ and $i,j\in\left\{ \pm1\right\} $
we have $\phi_{i,j}\left(t\right)\ge1/4+c_{i,j}\left(t\right)-f_{i,j}^{*}\left(t\right)$.\end{lem}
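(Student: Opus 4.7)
Proof plan:

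I would prove the inequality by induction on $t$. The base case $t=0$ holds with equality, since by the stated conventions $\phi_{i,j}(0) = 1/4$, $c_{i,j}(0) = 0$, and $f_{i,j}^{*}(0) = 0$. For the inductive step, using $c_{i,j}(t+1)-c_{i,j}(t) = a_{i,j}(t)$ (by the definition of $a_{i,j}$) and $f_{i,j}^{*}(t+1)-f_{i,j}^{*}(t) = f_{i,j}(t+1)$, the target bound at time $t+1$ follows from the inductive hypothesis at time $t$ once one verifies the per-step inequality
\[
\phi_{i,j}(t+1) + f_{i,j}(t+1) \ge \phi_{i,j}(t) + a_{i,j}(t).
\]

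I would then split on $a_{i,j}(t)$. If $a_{i,j}(t) = 0$, Claim~\ref{clm:frozen-iff-zero-potential} gives $\phi_{i,j}(t) = 0$, and the inequality is trivial because both terms on the left are non-negative. If $a_{i,j}(t) = 1$, the front genuinely advances, and I would appeal to a fan-out argument specific to the Cartesian grid: every maximal run of $k$ consecutive endangered points on $L_{i,j}(t)$ burns at time $t+1/3$, and its outward neighbours on $L_{i,j}(t+1)$ form a segment of exactly $k+1$ consecutive points. Distinct maximal runs project to disjoint such segments because the points separating them (either firefighters on $L_{i,j}(t)$ or the front endpoints) shield their outward neighbours. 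Each firefighter placed at round $t+1$ on $L_{i,j}(t+1)$ subtracts at most one unit from this weighted count, so aggregating over all runs the potential on the new front gains at least $1$ over $\phi_{i,j}(t)$, minus the $f_{i,j}(t+1)$ absorbed units.

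The main obstacle will be the corner bookkeeping. An endangered point at a corner of $L_{i,j}(t)$ carries weight $1/2$ rather than $1$; its outward image may again include a corner of $L_{i,j}(t+1)$ with weight $1/2$, and firefighters placed at corners of $L_{i,j}(t+1)$ are shared with an adjacent $f_{i',j'}(t+1)$. To handle this cleanly I would introduce subcases indexed by the activity status of the two adjacent fronts $L_{i,-j}$ and $L_{-i,j}$ at time $t$, and check that the geometric ``$+1$'' gain of the fan-out survives every possible fractional reallocation at the boundary. The touchiest configuration is when an adjacent front reactivates after being frozen, since by the remark following Claim~\ref{clm:frozen-iff-zero-potential} reactivation deposits half a unit of corner potential that must be attributed consistently between the two fronts; verifying the per-step inequality in this case is where I expect the bulk of the casework to lie.
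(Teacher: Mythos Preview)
Your proposal is correct and follows exactly the paper's approach: both reduce the lemma to the per-step inequality $\phi_{i,j}(\tau+1)\ge\phi_{i,j}(\tau)+a_{i,j}(\tau)-f_{i,j}(\tau+1)$ and then sum (equivalently, induct). The only difference is emphasis: the paper states the fan-out fact in one line (``the $\phi_{i,j}(\tau)$ burning points on it have at least $1+\phi_{i,j}(\tau)$ neighbors in $L_{i,j}(\tau+1)$'') and does not spell out the corner bookkeeping you flag, so the casework you anticipate is real but is simply suppressed in the paper's presentation.
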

\begin{proof}
If $L_{i,j}$ is active at time $\tau$, then the $\phi_{i,j}\left(\tau\right)$
burning points on it have at least $1+\phi_{i,j}\left(\tau\right)$
neighbors in $L_{i,j}\left(\tau+1\right)$, of which at most $f_{i,j}\left(\tau+1\right)$
are protected by time $\tau+1$. If $L_{i,j}$ is frozen at time $\tau$,
then all points on $L_{i,j}\left(\tau+1\right)$ adjacent to burning
points are protected by time $\tau+1$. In any case, we have \[
\phi_{i,j}\left(\tau+1\right)\ge\phi_{i,j}\left(\tau\right)+a_{i,j}\left(\tau\right)-f_{i,j}\left(\tau+1\right).\]
Summing this for $\tau=0,1,\ldots,t-1$ yields \[
\phi_{i,j}\left(t\right)\ge\phi_{i,j}\left(0\right)-f_{i,j}^{*}\left(t\right)+\sum_{\tau=0}^{t-1}a_{i,j}\left(\tau\right)=\phi_{i,j}\left(t\right)\ge\phi_{i,j}\left(0\right)+c_{i,j}\left(t\right)-f_{i,j}^{*}\left(t\right),\]
as stated by the lemma.
\end{proof}
The next two lemmata lay the foundations for the proof of Proposition~\ref{prop:perimeter-expands}.
\begin{lem}
\label{lem:potential-exceeds-semiperimeter}If $\rho\left(t\right)\ge2f^{*}\left(t\right)-1$
then $\phi\left(t\right)>\rho\left(t\right)/2$.\end{lem}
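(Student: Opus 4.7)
The plan is to obtain the lemma by summing the per-front estimate of Lemma~\ref{lem:potential-LB} over all four directions and then feeding in the hypothesis. First I would sum the inequality $\phi_{i,j}(t) \ge 1/4 + c_{i,j}(t) - f_{i,j}^{*}(t)$ over $(i,j) \in \{\pm1\}^2$. The left-hand side totals $\phi(t)$ by definition of the total potential; on the right, the four $1/4$ contributions total $1$, and the $c_{i,j}(t)$ terms total $\rho(t)$ by the identity $\rho = \sum_{i,j} c_{i,j}$ recorded just after the definition of the perimeter.

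Next I would deal with the firefighter count. The definition of $f_{i,j}$ insists that every firefighter be charged to at most one front and at most one round (the ``not counted in any $f_{i',j'}(\tau)$ for $\tau<t$'' clause), so $\sum_{i,j} f_{i,j}^{*}(t) \le f^{*}(t)$. Substituting gives the clean bound
\[
\phi(t) \ge 1 + \rho(t) - f^{*}(t).
\]

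Finally, I would apply the hypothesis $\rho(t) \ge 2f^{*}(t) - 1$, which rearranges to $f^{*}(t) \le (\rho(t)+1)/2$. Plugging this into the previous inequality yields
\[
\phi(t) \;\ge\; 1 + \rho(t) - \frac{\rho(t)+1}{2} \;=\; \frac{\rho(t)}{2} + \frac{1}{2} \;>\; \frac{\rho(t)}{2},
\]
which is exactly the conclusion.

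I do not anticipate a serious obstacle here: the only delicate bookkeeping point is the inequality $\sum_{i,j} f_{i,j}^{*}(t) \le f^{*}(t)$, which relies precisely on the non-double-counting clause in the definition of $f_{i,j}$ (needed since a corner firefighter could otherwise be charged to two adjacent fronts). Once that is clear, the rest is an immediate algebraic consequence of Lemma~\ref{lem:potential-LB} and the hypothesis.
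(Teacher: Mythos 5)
Your proposal is correct and matches the paper's proof exactly: the paper likewise sums Lemma~\ref{lem:potential-LB} over all four directions to get $\phi(t)\ge 1+\rho(t)-f^{*}(t)\ge 1/2+\rho(t)/2$. Your explicit remark that $\sum_{i,j}f_{i,j}^{*}(t)\le f^{*}(t)$ follows from the non-double-counting clause is a useful clarification of a step the paper leaves implicit.
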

\begin{proof}
Summed over all directions $i,j\in\left\{ \pm1\right\} $, Lemma~\ref{lem:potential-LB}
yields $\phi\left(t\right)\ge1+\rho\left(t\right)-f^{*}\left(t\right)\ge1/2+\rho\left(t\right)/2.$\end{proof}
\begin{lem}
\label{lem:sum-of-opposing-potentials-is-positive}If $\rho\left(t\right)\ge2f^{*}\left(t\right)-1$
then $\phi_{i,j}\left(t\right)+\phi_{-i,-j}\left(t\right)>0$.\end{lem}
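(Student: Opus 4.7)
My plan is to argue by contradiction: assume $\phi_{i,j}(t)=\phi_{-i,-j}(t)=0$, which is the negation of the conclusion since each $\phi_{i',j'}$ is non-negative. The goal will be to show $f_{i,j}^{*}(t)+f_{-i,-j}^{*}(t)>(\rho(t)+1)/2$. This contradicts the hypothesis $f^{*}(t)\le(\rho(t)+1)/2$ (a restatement of $\rho(t)\ge 2f^{*}(t)-1$) together with the natural bound $f_{i,j}^{*}(t)+f_{-i,-j}^{*}(t)\le f^{*}(t)$, which holds for an \emph{opposite} pair of fronts because such fronts never share a point, so no firefighter is double-counted even at a corner.

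The first ingredient is Lemma~\ref{lem:potential-LB} applied separately to the two dormant directions and summed: since both left-hand sides vanish, I obtain
\[ f_{i,j}^{*}+f_{-i,-j}^{*}\ge c_{i,j}+c_{-i,-j}+1/2. \]
The second ingredient establishes $c_{i,j}+c_{-i,-j}>\rho/2$, i.e., that the diagonal extent along the dormant axis exceeds half the perimeter. To see this, note that under the dormant-pair assumption $\phi=\phi_{i,-j}+\phi_{-i,j}$, so Lemma~\ref{lem:potential-exceeds-semiperimeter} (which applies in the regime of our hypothesis) yields $\phi_{i,-j}+\phi_{-i,j}>\rho/2$. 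On the other hand, Claim~\ref{clm:length-bounds-potential} together with the identity $\rho_{i,-j}=\rho_{-i,j}=(c_{i,j}+c_{-i,-j})/2$ (a direct consequence of the defining formula $\rho_{i',j'}=(c_{i',-j'}+c_{-i',j'})/2$) gives $\phi_{i,-j}+\phi_{-i,j}\le c_{i,j}+c_{-i,-j}$. Chaining the two ingredients closes the contradiction: $f_{i,j}^{*}+f_{-i,-j}^{*}>\rho/2+1/2\ge f^{*}(t)$.

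The only non-routine step is the geometric identification $\rho_{i,-j}+\rho_{-i,j}=c_{i,j}+c_{-i,-j}$, which converts a length bound on the two perpendicular (live) fronts into a separation bound along the dormant axis. Once that observation is in place, the three previously established facts combine mechanically and the inequalities close with no slack to spare—consistent with this lemma underlying the tight $3/2$ threshold of Theorem~\ref{thm:cartesian-grid-LB}.
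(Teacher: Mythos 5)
Your proof is correct and uses essentially the same ingredients as the paper's own argument: Lemma~\ref{lem:potential-LB} applied to the opposing pair, Claim~\ref{clm:length-bounds-potential} together with Lemma~\ref{lem:potential-exceeds-semiperimeter} applied to the perpendicular pair, and the identity $\rho_{i,-j}+\rho_{-i,j}=c_{i,j}+c_{-i,-j}$. The only difference is organizational: the paper splits into two cases according to whether $c_{i,j}+c_{-i,-j}>\rho(t)/2$ or $c_{i,-j}+c_{-i,j}\ge\rho(t)/2$, whereas you chain the same inequalities into a single argument by contradiction.
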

\begin{proof}
We have $c_{i,j}\left(t\right)+c_{i,-j}\left(t\right)+c_{-i,j}\left(t\right)+c_{-i,-j}\left(t\right)=\rho\left(t\right)$
so at least one of the following cases is guaranteed to hold.
\begin{caseenv}
\item If $c_{i,j}\left(t\right)+c_{-i,-j}\left(t\right)>\rho\left(t\right)/2$,
then by applying Lemma~\ref{lem:potential-LB} twice we get \[
\phi_{i,j}\left(t\right)+\phi_{-i,-j}\left(t\right)\ge1/2+c_{i,j}\left(t\right)+c_{-i,-j}\left(t\right)-f^{*}\left(t\right)>1/2+\rho\left(t\right)/2-f^{*}\left(t\right)\ge0.\]

\item If $\rho_{i,j}\left(t\right)+\rho_{-i,-j}\left(t\right)=c_{i,-j}\left(t\right)+c_{-i,j}\left(t\right)\ge\rho\left(t\right)/2$
then by Claim~\ref{clm:length-bounds-potential} we have 
\end{caseenv}
\[
\phi_{i,-j}\left(t\right)+\phi_{-i,j}\left(t\right)\le\rho_{i,-j}\left(t\right)+\rho_{-i,j}\left(t\right)=\rho\left(t\right)-\rho_{i,j}\left(t\right)-\rho_{-i,-j}\left(t\right)\le\rho\left(t\right)/2\]
and by Lemma~\ref{lem:potential-exceeds-semiperimeter} we get \[
\phi_{i,j}\left(t\right)+\phi_{-i,-j}\left(t\right)=\phi\left(t\right)-\phi_{i,-j}\left(t\right)-\phi_{-i,j}\left(t\right)\ge\phi\left(t\right)-\rho\left(t\right)/2>0.\qedhere\]

\end{proof}
The following proposition concludes the proof by showing that the
fire expands indefinitely and thus cannot be controlled.
\begin{prop}
\label{prop:perimeter-expands}Assume that $f^{*}\left(t\right)\le\left(3t+1\right)/2$
for all $t\in\mathbb{N}$. Then $\rho\left(t\right)\ge3t$ for all
$t\in\mathbb{N}$.\end{prop}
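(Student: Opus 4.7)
The plan is to induct on $t$, with the base case $\rho(0)\ge 0$ being immediate. For the inductive step, we suppose $\rho(t) \ge 3t$ and must show that $a(t) = \rho(t+1) - \rho(t)$ is large enough to push us to $\rho(t+1) \ge 3(t+1)$. The key preparatory observation is that the hypothesis of Lemma~\ref{lem:sum-of-opposing-potentials-is-positive} and Lemma~\ref{lem:potential-exceeds-semiperimeter} is automatically satisfied: $2f^{*}(t)-1 \le 3t \le \rho(t)$.

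I would first establish the uniform weak bound $a(t)\ge 2$. Indeed, if three or more fire fronts were frozen at time $t$, then some pair of opposing fronts would both be frozen, giving $\phi_{i,j}(t)=\phi_{-i,-j}(t)=0$ by Claim~\ref{clm:frozen-iff-zero-potential}, in direct contradiction to Lemma~\ref{lem:sum-of-opposing-potentials-is-positive}. Now split on how much slack we have. \textbf{Case A:} if $\rho(t)\ge 3t+1$, then $\rho(t+1)\ge \rho(t)+a(t)\ge 3t+3$, and we are done.

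\textbf{Case B} is the tight case $\rho(t)=3t$, where we must upgrade the bound to $a(t)\ge 3$. Assume for contradiction $a(t)=2$; by the previous paragraph the two frozen fronts cannot be opposite, so they are adjacent, say $L_{1,1}$ and $L_{-1,1}$, yielding $\phi_{1,1}(t)=\phi_{-1,1}(t)=0$. Now Lemma~\ref{lem:potential-exceeds-semiperimeter} gives the strict inequality $\phi(t)>\rho(t)/2=3t/2$, so $\phi_{1,-1}(t)+\phi_{-1,-1}(t)>3t/2$. On the other hand, by Claim~\ref{clm:length-bounds-potential} combined with the identity $\rho_{i,j}+\rho_{-i,-j}=\tfrac12(c_{1,1}+c_{-1,1}+c_{1,-1}+c_{-1,-1})=\rho/2$ applied to the remaining pair, we have $\phi_{1,-1}(t)+\phi_{-1,-1}(t)\le \rho_{1,-1}(t)+\rho_{-1,-1}(t)=3t/2$, a contradiction. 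Hence $a(t)\ge 3$ and $\rho(t+1)\ge 3t+3$.

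The only real subtlety, and what I would regard as the main obstacle, is recognizing that the trivial bound $a(t)\ge 2$ together with the inductive hypothesis $\rho(t)\ge 3t$ is \emph{not} strong enough on its own — the slack-free case $\rho(t)=3t$ forces us to rule out $a(t)=2$, and this uses the strict form of Lemma~\ref{lem:potential-exceeds-semiperimeter} rather than the weaker pairwise Lemma~\ref{lem:sum-of-opposing-potentials-is-positive}. Once this pinch point is identified, the rest is a short case split.
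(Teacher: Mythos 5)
Your proof is correct and uses exactly the same ingredients as the paper's: Lemma~\ref{lem:sum-of-opposing-potentials-is-positive} rules out two frozen opposite fronts, while Lemma~\ref{lem:potential-exceeds-semiperimeter} together with Claim~\ref{clm:length-bounds-potential} and the fact that two \emph{adjacent} fronts have total length $\rho/2$ rules out two frozen adjacent fronts, so the argument is essentially the paper's with a cosmetic reorganization. Two small remarks: the paper deduces $a\left(t\right)\ge3$ unconditionally from $\rho\left(t\right)\ge2f^{*}\left(t\right)-1$, so your Case A/B split (and the intermediate bound $a\left(t\right)\ge2$) is unnecessary; and the identity you state should refer to an adjacent pair --- $\rho_{i,j}+\rho_{-i,-j}$ equals $c_{i,-j}+c_{-i,j}$, which is not $\rho/2$ in general, though the instance you actually use, $\rho_{1,-1}+\rho_{-1,-1}$, is an adjacent pair and the computation is right.
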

\begin{proof}
We prove this by induction on $t$. For $t=0$ we have $\rho\left(0\right)=0$.
Assume $\rho\left(t\right)\ge3t\ge2f^{*}\left(t\right)-1$. By Lemma~\ref{lem:potential-exceeds-semiperimeter}
no two adjacent fire fronts can be frozen at time $t$, since the
sum of the potential of the two others cannot exceed the sum of their
lengths, which is the semi-perimeter. By Lemma~\ref{lem:sum-of-opposing-potentials-is-positive}
no two opposing fire fronts can be frozen at time $t$. Thus, $a\left(t\right)\ge3$
and $\rho\left(t+1\right)=\rho\left(t\right)+a\left(t\right)\ge3t+3$.
\end{proof}

\section{\label{sec:upper-bound-proof}Proof of Theorem~\ref{thm:strong-grid-UB}}

To make the proof easier, we make the following assumptions without
loss of generality.
\begin{enumerate}
\item The fire breaks out in an $\ell_{\infty}$-ball of radius $r\ge0$,
i.e., an axes-parallel $\left(2r+1\right)\times\left(2r+1\right)$
square, centered at the origin.
\item There exist some $t_{0}\in\mathbb{N}$ and $\epsilon>0$ such that
$f^{*}\left(t\right)\ge\left(3+\epsilon\right)t$ for all $t\ge t_{0}$.
This is because $\lim\inf f^{*}\left(t\right)/t>3$ implies the existence
of such $t_{0}$ and $\epsilon$ for which $\inf\left\{ f^{*}\left(t\right)/t:t\ge t_{0}\right\} \ge3+\epsilon$.
\item We may assume $t_{0}=1$ since we may enlarge the initial fire by
adding $t_{0}$ to $r$.
\end{enumerate}
The only property of $f$ we will use, which is a strengthened form
of $f^{*}\left(t\right)>3t$, is the following. Set $m=\left\lceil 1/\epsilon\right\rceil $.
Then for all $k\in\mathbb{N}$ we have \[
f^{*}\left(mk+1\right)\ge3\left(mk+1\right)+\epsilon\left(mk+1\right)>3mk+3m+k.\]

Now we describe a strategy $\mathcal{S}=\mathcal{S}\left(m,r\right)$
that allows $f$ firefighters to control a fire that breaks out in
an $\ell_{\infty}$-ball of radius $r\ge1$ centered at the origin.

Our strategy has four phases. In a terminology similar to the one
used in Section~\ref{sec:lower-bound-proof}, we are guaranteed to
have at least $k-1$ frozen fronts during the $k$th phase, hence
when the fourth phase ends, all four fronts are frozen and the fire
is controlled. 

The following invariants are maintained:
\begin{itemize}
\item The shape of the fire at all times is an $\ell_{\infty}$-ellipse
(that is, an axes-parallel rectangle).
\item The firefighters are placed on the perimeter of an $\ell_{\infty}$-ellipse.
\item Each firefighter is placed next to an already positioned firefighter
(except for the first one, of course).
\end{itemize}
\noindent \begin{center}
\begin{table}[h]
\noindent \begin{centering}
\begin{tabular}{|c|c|c|c|c|}
\hline 
Time & Fire width & Fire height & Available firefighters & Frozen\tabularnewline
\hline
\hline 
$1$ & $2r+1$ & $2r+1$ & $\ge4$ & -\tabularnewline
\hline 
$2r$ & $6r-1$ & $6r-1$ & $\ge6r+1$ & North\tabularnewline
\hline 
$6rm+1$ & $12rm+2r+1$ & $6rm+4r$ & $\ge18rm+6r+4$ & North, East\tabularnewline
\hline 
$6rm^{2}+10rm$ & $6rm^{2}+16rm+2r$ & $6rm^{2}+10rm+3r-1$ & $\ge18rm^{2}+36rm+10r$ & N, E, W\tabularnewline
\hline 
$12rm^{2}+30rm$ & $6rm^{2}+16rm+2r$ & $12rm^{2}+30rm+3r-1$ & $\ge36rm^{2}+102rm+30r$ & All\tabularnewline
\hline
\end{tabular}
\par\end{centering}

\caption{Key times for the strategy $\mathcal{S}\left(m,r\right)$}

\end{table}

\par\end{center}

\paragraph{First phase: northern front.}

This phase begins at time $t=1$ and ends at time $t=2r$. All firefighters
are placed on the horizontal line $y=3r$ between $x_{\min}=1-3r$
and $x_{\max}=3r-1$. Note that $f^{*}\left(2r\right)\ge6r+1$ and
thus by the end of the phase, when the fire has grown to an $\ell_{\infty}$-ball
of radius $3r-1$ and has reached the northern front, the front is
just long enough so that the fire is not able to spread north anymore.

\paragraph{Second phase: eastern front.}

This phase begins at time $t=2r+1$ and ends at time $t=6rm+1$. While
maintaining the west end of the northern front just out of the fire's
reach, the firefighters continue the northern front eastwards until
$x_{\max}=6rm+r+1$ and build an eastern front on this vertical line,
starting at the corner $y_{\max}=3r$ and going south until $y_{\min}=-r-6rm$.
Note that $f^{*}\left(6rm+1\right)\ge18rm+6r+4$ and thus by the end
of the phase, when the fire has grown to an $\ell_{\infty}$-ellipse
of height $6rm+4r$ and width $12rm+2r+1$ and has reached the eastern
front, the front is just long enough so that the fire is not able
to spread east anymore.

\paragraph{Third phase: western front.}

This phase begins at time $t=6rm+2$ and ends at time $t=6rm^{2}+10rm$.
While maintaining the south end of the eastern front just out of the
fire's reach, the firefighters continue the northern front westwards
until $x_{\min}=1-r-10rm-6rm^{2}$ and build a western front on this
vertical line, starting at the corner $y_{\max}=3r$ and going south
until $y_{\min}=1-10rm-6rm^{2}$. Note that $f^{*}\left(6rm^{2}+10rm\right)\ge18rm^{2}+36rm+10r$
and thus by the end of the phase, when the fire has grown to an $\ell_{\infty}$-ellipse
of height $6rm^{2}+10rm+3r-1$ and width $6rm^{2}+16rm+2r$ and has
reached the western front, the front is long enough (by $2r$ or so)
so that the fire is not able to spread east anymore.

\paragraph{Fourth phase: southen front.}

This phase begins at time $t=6rm^{2}+10rm+1$ and ends at time $t=12rm^{2}+30rm$.
While maintaining the south end of the eastern front just out of the
fire's reach, the firefighters continue the western front southwards
until $y_{\min}=1-30rm-12rm^{2}$ and build a southern front on this
horizontal line, starting at the corner $x_{\min}=1-r-10rm-6rm^{2}$
and going east until the eastern front is met at the corner $x_{\max}=6rm+r+1$.
This actually happens about $20r$ rounds before the end of the phase,
since $f^{*}\left(12rm^{2}+30rm\right)\ge36rm^{2}+102rm+30r$ and
thus by the end of the phase, when the fire has grown to an $\ell_{\infty}$-ellipse
of height $12rm^{2}+30rm+3r-1$ and width $6rm^{2}+16rm+2r$, it is
fully surrounded.

\section{\label{sec:corollaries-proof}Proof of Corollaries~\ref{cor:strong-grid-LB}
and~\ref{cor:cartesian-grid-UB}}

Using the following proposition, Theorem~\ref{thm:cartesian-grid-LB}
implies Corollary~\ref{cor:strong-grid-LB} and Theorem~\ref{thm:strong-grid-UB}
implies Corollary~\ref{cor:cartesian-grid-UB}.
\begin{prop}
If $f$ firefighters can control a fire that breaks out in a ball
of radius $r\ge0$ in the strong grid $\mathbb{Z}\strong\mathbb{Z}$,
then $g$ firefighters can control a fire that breaks out in a ball
of radius $2r$ in the Cartesian grid $\mathbb{Z}\cartesian\mathbb{Z}$,
where the function $g$ is defined by $g\left(t\right)=\begin{cases}
\left\lfloor f\left(k\right)/2\right\rfloor , & \textrm{if }t=2k-1;\\
\left\lceil f\left(k\right)/2\right\rceil , & \textrm{if }t=2k.\end{cases}$\end{prop}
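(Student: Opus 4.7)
The plan is to lift the strong strategy $F^S$ to Cartesian via a $45^{\circ}$-rotation of the even sublattice.

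Let $L_0 = \{(x,y)\in\mathbb Z^2 : x+y \text{ even}\}$ and $L_1 = \mathbb Z^2 \setminus L_0$. Define $\phi\colon L_0 \to \mathbb Z^2$ by $\phi(x,y) = \bigl((x+y)/2,\,(x-y)/2\bigr)$. A direct check shows that $\phi$ is a bijection and induces a graph isomorphism from $L_0$ under its two-step Cartesian dynamics (the 8 displacements $\{(\pm 2,0),(0,\pm 2),(\pm 1,\pm 1)\}$) to $\mathbb Z^2$ under strong dynamics; moreover, $\phi$ carries the $L_0$-part of the Cartesian ball of radius $2r$, namely $\{(x,y)\in L_0 : |x|+|y|\le 2r\}$, bijectively onto the strong ball $[-r,r]^2$.

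The proposed Cartesian strategy is: place the set $\phi^{-1}(F^S_k)\subseteq L_0$ over the two Cartesian rounds $2k-1$ and $2k$, using $\lfloor f(k)/2\rfloor$ firefighters in round $2k-1$ and the remaining $\lceil f(k)/2\rceil$ in round $2k$. This matches $g$ exactly.

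To analyze it, I would decompose $B^C_t = B^{C,0}_t \cup B^{C,1}_t$, the contributions of the $L_0$- and $L_1$-parts of the initial diamond. Under $\phi$, the $L_0$-slice $B^{C,0}_{2t}\cap L_0$ is identified with $B^S_t$ and is therefore bounded. For $B^{C,1}$, I would use that every $L_1$-point of the diamond has a Cartesian neighbor in the $L_0$-part; prepending such a step to any fire-path witnessing $p\in B^{C,1}_t$ yields a valid fire-path for $p\in B^{C,0}_{t+1}$, the extension being valid because every burning vertex is barred from being a firefighter. Hence $B^C_t \subseteq B^{C,0}_{t+1}$, which is bounded.

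The main technical obstacle is verifying that each prescribed Cartesian firefighter lands on an unburnt vertex. The round $2k-1$ case is immediate from the strong constraint $F^S_k\cap B^S_{k-1}=\varnothing$ and the $\phi$-correspondence. The round $2k$ case requires an induction: assume the correspondence holds through Cartesian time $2k-1$. If some $\phi^{-1}(p)$ with $p\in F^S_k$ were burning at time $2k-1$, the only way this can happen is through $B^{C,1}_{2k-1}$, since $B^{C,0}_{2k-1}\cap L_0 = B^{C,0}_{2k-2}\cap L_0 \cong B^S_{k-1}$ excludes $\phi^{-1}(p)$; then the fire-path prepending places $\phi^{-1}(p)$ in $B^{C,0}_{2k}$, which by the isomorphism equals $B^S_k$. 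But $p\in F^S_k$ forces $p\notin B^S_k$, a contradiction. This closes the induction and completes the control of $B^C_t$.
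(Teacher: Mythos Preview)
Your approach is the same $45^\circ$-rotation argument as the paper's: identify the even sublattice of $\mathbb{Z}\cartesian\mathbb{Z}$ under two Cartesian steps with $\mathbb{Z}\strong\mathbb{Z}$ via $\phi$, and split each strong round into two Cartesian rounds. The paper is terser: it simply observes that since all firefighters lie in $L_0$ and the initial fire's outer boundary lies in $L_0$, bipartiteness forces the $L_0$-burn set to change only at even Cartesian times, whence the even-time $L_0$-process is literally the strong process under $\phi$. Your $B^{C,0}/B^{C,1}$ decomposition is extra bookkeeping toward the same end.

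Your round-$2k$ validity check, however, has a circularity. You argue that if $\phi^{-1}(p)$ were burnt at time $2k-1$ via $B^{C,1}$, then fire-path prepending places it in $B^{C,0}_{2k}$, ``which by the isomorphism equals $B^S_k$''. But the identification $B^{C,0}_{2k}\cap L_0 = \phi^{-1}(B^S_k)$ requires all of $\phi^{-1}(F^S_k)$ to be in place by Cartesian time $2k$ --- exactly what you are trying to verify. If $\phi^{-1}(p)$ is already burnt and therefore cannot receive its firefighter, the Cartesian process at time $2k$ has one fewer firefighter than the strong process at time $k$, and the correspondence breaks precisely at $p$; you get $\phi^{-1}(p)\in B^{C,0}_{2k}\cap L_0 \supsetneq \phi^{-1}(B^S_k)$, which yields no contradiction.

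The fix is easy and is in fact latent in your own setup. Every $L_1$-initial point has $\ell_1$-norm at most $2r-1$, so all four of its Cartesian neighbours lie in $L_0$-initial. Hence any fire-path from $L_1$-initial has its \emph{second} vertex already in $L_0$-initial, and \emph{truncating} the first step (rather than prepending) gives $B^{C,1}_t \subseteq B^{C,0}_{t-1}$, not merely $B^{C,0}_{t+1}$. Applied at $t=2k-1$ this yields $\phi^{-1}(p)\in B^{C,0}_{2k-2}\cap L_0 = \phi^{-1}(B^S_{k-1})$ by the inductive hypothesis, contradicting $p\notin B^S_{k-1}$ without invoking time $2k$. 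Equivalently, one may observe directly (as the paper does) that bipartiteness forces $B^C_{2k-1}\cap L_0 = B^C_{2k-2}\cap L_0$, reducing the round-$2k$ check to the already-settled round-$(2k-1)$ check.
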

\begin{proof}
Without loss of generality, the fire center is the origin in both
grids. Let $\mathcal{S}$ be the strategy used by the firefighters
in $\mathbb{Z}\strong\mathbb{Z}$ to control the fire, and assume
that at time $t$, firefighters are placed in a set $P_{t}$ of $\left|P_{t}\right|=f\left(t\right)$
points.

\global\long\def\rot{\Lsh}
We exploit the connection between the metrics $\ell_{1}$ and $\ell_{\infty}$
on the plane $\mathbb{R}^{2}$ to convert $\mathcal{S}$ to a strategy
$\mathcal{S}'$ for $\mathbb{Z}\cartesian\mathbb{Z}$. Specifically,
we use the injective mapping $\rot:\mathbb{Z}\times\mathbb{Z}\to\mathbb{Z}\times\mathbb{Z}$
defined by $\rot\left(x,y\right)=\left(x+y,x-y\right)$. Partition
the set $P_{t}$ arbitrarily to two sets $P_{t}'$ and $P_{t}^{''}$
of respective sizes $g\left(2t-1\right)$ and $g\left(2t\right)$.
It is possible as $\left|P_{t}\right|=f\left(t\right)=g\left(2t-1\right)+g\left(2t\right)$.
The strategy $\mathcal{S}'$ places firefighters in $\rot\left(P_{t}'\right)$
at time $2t-1$ and in $\rot\left(P_{t}''\right)$ at time $2t-1$. 

Note that $\mathcal{S}'$ only places firefighters at even points;
that is, points $\left(x,y\right)$ such that $x+y$ is even. Recall
that the graph $\mathbb{Z}\cartesian\mathbb{Z}$ is bipartite, and
the initial fire boundary consists of even points only. Therefore,
at odd times the fire can only spreads to odd points (which are never
protected) and at even times the fire can spread only to unprotected
even points. It makes sense thus to consider the state of the process
only at even times $t=2k$. But behold --- the square of the graph
$\mathbb{Z}\cartesian\mathbb{Z}$ restricted to even points is isomorphic
to $\mathbb{Z}\strong\mathbb{Z}$ using the isomorphism $\rot$, and
the initial fire, the $\ell_{1}$-ball of radius $r$, is mapped by
$\rot$ to an $\ell_{\infty}$-ball of radius $2r$.

Since the strategy $\mathcal{S}$ is able to control the fire in $\mathbb{Z}\strong\mathbb{Z}$
in some finite time $T$, the strategy $\mathcal{S}'$ will control
the fire in the even part of $\mathbb{Z}\cartesian\mathbb{Z}$. This
establishes the result as $\mathbb{Z}\cartesian\mathbb{Z}$ is bipartite.
\end{proof}

\subsection*{Acknowledgements}

The authors thank Noga Alon for useful discussions.

\end{document}